\newtheorem{thm}{Theorem}[section]
\newtheorem{lemma}{Lemma}[section]
\theoremstyle{definition}
\newtheorem{defi}{Definition}
\newtheorem{assumption}{Assumption}
\theoremstyle{remark}
\newtheorem{remark}{Remark}[section]
\numberwithin{equation}{section}
\title[Resonances for Wigner-von Neumann type Hamiltonian]{Resonances and viscosity limit for \\
the Wigner-von Neumann type Hamiltonian}
\author{Kentaro Kameoka}
\address{Graduate School of Mathematical Sciences, University of Tokyo,
 3-8-1, Komaba, Meguro-ku, Tokyo 153-8914, Japan.}
\email{kameoka@ms.u-tokyo.ac.jp}
\thanks{KK is partially supported by FMSP program at the 
Graduate School of Mathematical Sciences, the University of Tokyo.}
\author{Shu Nakamura}
\address{Department of Mathematics, Faculty of Sciences, Gakushuin University
1-5-1, Mejiro, Toshima, Tokyo 171-8588, Japan.}
\email{shu.nakamura@gakushuin.ac.jp}
\thanks{SN thanks Professor J.-F. Bony for suggesting to consider the problem using methods 
of \cite{N2}. He also thanks Professor M. Zworski for valuable discussions and encouragements.
SN is partially supported by JSPS Grant Kiban (B) 15H03622. } 
\date{\today}
\subjclass[2010]{35J10, 35P25}
\keywords{quantum resonances, Wigner-von Neumann potential, semiclassical analysis, viscosity limit} 
\begin{document}

\maketitle

\begin{abstract}
The resonances for the Wigner-von Neumann type Hamiltonian are defined by the periodic 
complex distortion in the Fourier space. 
Also, following Zworski, we characterize resonances as the limit points of discrete eigenvalues 
of the Hamiltonian with a quadratic complex absorbing potential in the viscosity type limit.
\end{abstract}

\section{Introduction}
In this paper, we consider the one dimensional Schr\"odinger operator
\[
P=-\frac{d^2}{dx^2}+V(x) \hspace{0.3cm} \text{on} \hspace{0.3cm}L^2(\mathbb{R})
\]
and its resonances, where $V(x)$ is a oscillatory and decaying potential. We assume 

\begin{assumption}\label{ass-1} 
The potential $V(x)$ has the following form: 
\[
V(x)=\sum_{j=1}^{J}s_j(x)W_j(x),
\]
where $J\in \mathbb{N}$, 
$s_j \in C(\mathbb{R}; \mathbb{R})$ are periodic functions with period $\pi$ whose Fourier serieses converge 
absolutely, and
$W_j \in C^{\infty}(\mathbb{R}; \mathbb{R})$ have analytic continuations to the region 
$\{z=x+iy\mid |x|>R_0, |y|<K|x|\}$ for some $R_0>0$ and $K>0$ with the bound $|W_j(z)|\le C|z|^{-\mu}$ 
for some $\mu>0$ in this region.
\end{assumption}

A typical example is $V(x)=a\frac{\sin 2x}{x}$, where $a \in \mathbb{R}$. 
We note that $P$ is not dilation analytic in this case since the potential is exponentially growing in the complex direction. 
We also note that dilation analytic potentials satisfy Assumption~\ref{ass-1} by setting $s_j (x)=1$. 
We first show that resonances can be defined for this class of potentials. 
We write the set of threshold by $\mathcal{T}=\{n^2\mid n\in \mathbb{N}\cup\{0\}\}$. 
The resolvent on the upper half plane is denoted by $R_+(z)=(z-P)^{-1}$, $\mathrm{Im}z>0$. 

\begin{thm}\label{thm-1}
Under Assumption~\ref{ass-1}, there exists a complex neighborhood:  
$\Omega\subset \mathbb{C}$ of $[0, \infty)\setminus \mathcal{T}$
such that the following holds: 
For any $f, g \in L_{\mathrm{comp}}^{2}(\mathbb{R})$, 
the matrix element $(f, R_+(z)g)$ has a meromorphic continuation to $\Omega$. 
\end{thm}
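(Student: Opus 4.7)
My approach is to implement a complex contour distortion on the Fourier side, as suggested by the abstract, exploiting that the $\pi$-periodicity of the $s_j$ becomes a discrete shift structure in the Fourier representation which is compatible with a $2$-periodic deformation of the contour. Passing to the Fourier representation $\tilde P = \mathcal F P \mathcal F^{-1}$, the operator becomes multiplication by $\xi^{2}$ plus convolution with the kernel
\[
\hat V(\zeta)=\sum_{j,n}c_{j,n}\,\hat W_j(\zeta-2n),
\]
where the $c_{j,n}$ are the Fourier coefficients of $s_j$ and the shifts by $2n$ reflect its $\pi$-periodicity. A contour $\phi_\theta:\xi\mapsto\xi-i\theta g(\xi)$ with $2$-periodic $g$ satisfies $\phi_\theta(\xi+2m)-\phi_\theta(\eta+2m)=\phi_\theta(\xi)-\phi_\theta(\eta)$, so this shift structure is preserved by the distortion.

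The first preparatory step is to show that each $\hat W_j$ extends analytically to a horizontal strip $\{|\mathrm{Im}\,\zeta|<\delta\}$ with polynomial decay in $|\mathrm{Re}\,\zeta|$. This is a contour deformation in the Fourier integral defining $\hat W_j$, using the sectorial analyticity of $W_j$ from Assumption~\ref{ass-1} to reach a neighborhood of the real axis and the decay $|W_j(z)|\le C|z|^{-\mu}$ to guarantee convergence. The absolute convergence of the Fourier series of $s_j$ then transfers this analyticity to $\hat V$.

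Fix next a smooth $2$-periodic $g\ge 0$ vanishing exactly at the integers, for example $g(\xi)=\sin^{2}(\pi\xi)$. For small real $\theta>0$, define the distorted operator $P_\theta$ on $L^2(\mathbb R_\xi)$ whose free part is multiplication by $\phi_\theta(\xi)^2$ and whose potential part has integral kernel
\[
\frac{1}{2\pi}\,\hat V\bigl(\phi_\theta(\xi)-\phi_\theta(\eta)\bigr)\,\phi_\theta'(\eta).
\]
Since $|\mathrm{Im}(\phi_\theta(\xi)-\phi_\theta(\eta))|=O(\theta)$ uniformly, the argument remains in the analyticity strip of $\hat V$. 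I would then show that $P_\theta$ is closed on $L^2(\mathbb R)$ and that its potential part is relatively compact with respect to the free part, so Weyl's theorem gives $\sigma_{\mathrm{ess}}(P_\theta)=\{\phi_\theta(\xi)^2:\xi\in\mathbb R\}$. The crucial geometric point is that this curve meets the real axis exactly at $\xi\in\mathbb Z$, where $g$ vanishes, producing the values $\mathcal T=\{n^2\}$; its complement therefore contains a complex neighborhood $\Omega$ of $[0,\infty)\setminus\mathcal T$.

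A standard distortion-analyticity argument closes the proof. For $f\in L^2_{\mathrm{comp}}$ the Fourier transform $\mathcal F f$ is entire by Paley--Wiener, so $U_\theta \mathcal F f$, given by evaluating $\mathcal F f$ along $\Gamma_\theta$ with the Jacobian $(\phi_\theta')^{1/2}$, is holomorphic in $\theta$ in a neighborhood of $0$. The identity
\[
(f,R_+(z)g)=\bigl(U_{\bar\theta}\mathcal F f,(z-P_\theta)^{-1}U_\theta\mathcal F g\bigr)
\]
holds for real $\theta$ small and $\mathrm{Im}\,z>0$, and extends to complex $\theta$ by uniqueness of analytic continuation; since the right-hand side is then meromorphic in $z\in\Omega$, so is the left-hand side. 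I expect the main technical obstacle to lie in the previous step: proving that the complex-distorted convolution operator is $L^2$-bounded and gives the claimed essential spectrum through relative compactness. This requires $\theta$-uniform kernel estimates via the analyticity strip from Step~1 together with delicate use of the $2$-periodicity of $g$ to keep every shifted copy of $\hat W_j$ inside that strip simultaneously.
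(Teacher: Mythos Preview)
Your overall strategy matches the paper's: a periodic contour distortion on the Fourier side exploiting the $2\mathbb Z$-shift structure of $\hat V$. But two concrete points go wrong.

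The decisive one is the requirement $g\ge 0$. With any such profile and $\phi_\theta(\xi)=\xi-i\theta g(\xi)$ for real $\theta>0$ one has $\mathrm{Im}\,\phi_\theta(\xi)^{2}=-2\theta\,\xi\,g(\xi)$, which is $\le 0$ for $\xi>0$ and $\ge 0$ for $\xi<0$. Since $g$ vanishes exactly at the integers, the branch of the essential spectrum coming from $\xi\in(n-1,n)$ is an arc joining $(n-1)^{2}$ to $n^{2}$ through the \emph{lower} half plane, while the branch from $\xi\in(-n,-(n-1))$ is an arc between the same two endpoints through the \emph{upper} half plane. Together they form a closed loop enclosing $((n-1)^{2},n^{2})$. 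Thus, although the complement of $\sigma_{\mathrm{ess}}(P_\theta)$ does contain a neighborhood of $[0,\infty)\setminus\mathcal T$ as you assert, that neighborhood lies in the \emph{interior} of these loops and is disconnected from the region $\{\mathrm{Im}\,z\gg 0\}$ where the identity with $R_+(z)$ is first established; no meromorphic continuation across the upper arc is possible. The paper avoids this by taking the \emph{odd} profile $\sin(\pi\xi)$, i.e.\ $\Phi_\theta(\xi)=\xi+\theta\sin(\pi\xi)$: oddness makes $\Phi_\theta(\xi)^{2}$ even in $\xi$, so the $\pm\xi$ branches coincide and one obtains a single curve that, for $\theta=(-1)^{n}i\delta$, dips below the real axis precisely over $((n-1)^{2},n^{2})$.

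Second, $\hat W_j$ does not extend to a horizontal strip: polynomial decay of $W_j$ is far from the exponential decay this would require. Already $W(x)=(1+x^{2})^{-1}$, which satisfies Assumption~A, gives $\hat W(\xi)=\sqrt{\pi/2}\,e^{-|\xi|}$, non-analytic at $0$. The contour deformation you describe actually yields analyticity of $\hat W_j$ only in two opposite \emph{sectors} $\{|\arg\zeta|<\tau\}\cup\{|\arg(-\zeta)|<\tau\}$, generally with an integrable singularity at the vertex; consequently $\hat V$ is analytic on a $2\mathbb Z$-periodic union of such biconical regions, with singularities at every even integer. One then has to check that $\phi_\theta(\xi)-\phi_\theta(\eta)$ stays in this sectorial domain for all real $\xi,\eta$, a genuinely sharper condition than merely lying in a strip, and this is where the precise choice of $g$ and the restriction $|\theta|<K$ from Assumption~A enter. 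This second issue is repairable once the first is fixed.
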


\begin{remark}
$\Omega$ in Theorem~\ref{thm-1} and Theorem~\ref{thm-2}  are given explicitly in Section 2 and Section 3 respectively.
\end{remark} 
\begin{remark}
Unfortunately, the original Wigner-von Neumann potential (\cite{NW}, see also \cite[Section~XIII.13]{RS}) 
\[V(x)=(1+g(x)^2)^{-2}(-32\sin x)(g(x)^3\cos x-3g(x)^2 \sin^3 x+g(x)\cos x +\sin^3 x),\]
where $g(x)=2x-\sin 2x$, does not seem to satisfy Assumption~\ref{ass-1}. 
In fact, the argument principle implies that if $\nu>1/2$ and $\ell\gg 1$ with $\ell\in \mathbb{Z}$, 
$g(z)\pm i$ have two zeros in the region 
$\{z \in \mathbb{C} \mid (\ell-1/2)\pi \le \mathrm{Re}z\le (\ell+1/2)\pi,  
-\nu \log \ell\le \mathrm{Im}z\le \nu \log \ell\}$. 
Thus another method is needed to study the complex resonances for the original 
Wigner-von Neumann Hamiltonian. 
\end{remark}

Following the standard theory of resonances, the complex resonances are defined using this meromorphic continuation. 

\begin{defi}
Let $R_+(z)$ be the meromorphic continuation of the resolvent for $P$ as in 
Theorem~\ref{thm-1}. A complex number $z\in \Omega$ is called a resonance if $z$ is a pole of $(f, R_+(z)g)$ 
for some $f, g \in L_{\mathrm{comp}}^{2}(\mathbb{R})$
and the multiplicity $m_z$ is defined as the maximal number $m$ such that 
there exist $ f_1, \dots ,f_m , g_1, \dots, g_m \in L_{\mathrm{comp}}^{2}(\mathbb{R})$ 
with  $\det\bigl(\frac{1}{2\pi i}\oint_{C(z)}(f_i, R_{+}(\zeta)g_j)d\zeta\bigr)_{i,j=1}^{m} \not =0 $, 
where $C(z)$ is a small circle around $z$. 
The set of resonances is denoted by $\mathrm{Res}(P)$. 
\end{defi}
\begin{remark}
$\mathrm{Res}(P)$ is discrete and $m_z<\infty$ for any $z \in \Omega$ (see Remark~\ref{rem-2-2}).
\end{remark}

We now introduce a complex dissipative potential as follows: 
\[
P_{\varepsilon}=-\frac{d^2}{dx^2}+V(x)-i\varepsilon x^2,\quad \varepsilon >0. 
\]
We easily see that $P_{\varepsilon}$, $\varepsilon>0$, has purely discrete spectrum on $L^2(\mathbb{R})$. 
Zworski~\cite{Z} proved that the set of resonances can be characterized as limit points of 
the eigenvalues of $P_\varepsilon$ as $\varepsilon \to 0$, 
namely $\lim_{\varepsilon \to 0}\sigma_d(P_{\varepsilon})=\mathrm{Res}(P)$  for the dilation analytic case. 
Zworski~\cite{Z} also proposed a problem of finding a potential $V(x)$ such that the limit set of 
$\sigma_d(P_{\varepsilon})$ when $\varepsilon \to 0$ 
is not discrete, and suggested $V(x)= \frac{\sin x}{x}$ as a candidate for such $V(x)$. 
Our next result disproves this conjecture. 

\begin{thm}\label{thm-2}
Under Assumption~\ref{ass-1}, there exists a complex neighborhood $\Omega\subset \mathbb{C}$ 
of $[0, \infty)\setminus \mathcal{T}$ such that 
$\lim_{\varepsilon \to 0}\sigma_d(P_{\varepsilon})=\mathrm{Res}(P)$ in $\Omega$ including multiplicities. 
In particular, $\lim_{\varepsilon \to 0}\sigma_d(P_{\varepsilon})$ is discrete in $\Omega$.
More precisely, for any $z \in \Omega$ there exists $\rho_0>0$ such that 
for any $0<\rho<\rho_0$ there exists $\varepsilon_0>0$ 
such that for any $0<\varepsilon<\varepsilon_0$, 
 \[ \# \sigma_d(P_{\varepsilon})\cap B(z, \rho)=m_z,\]
 where $B(z, \rho)=\{w \in \mathbb{C}\mid |w-z|\le \rho\}$.
\end{thm}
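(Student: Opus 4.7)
My strategy follows Zworski's argument in \cite{Z}, with dilation analyticity replaced by the periodic Fourier-space distortion constructed in Section~2 for the proof of Theorem~\ref{thm-1}. Let $U_\theta$ denote that distortion family, so that $P_\theta = U_\theta P U_\theta^{-1}$ is a closed operator on $L^2(\mathbb{R})$ whose discrete spectrum in $\Omega$ equals $\mathrm{Res}(P)$ with algebraic multiplicities $m_z$. Setting $P_{\varepsilon,\theta} := U_\theta P_\varepsilon U_\theta^{-1} = P_\theta - i\varepsilon Q_\theta$ with $Q_\theta := U_\theta x^2 U_\theta^{-1}$, I would first verify that $P_{\varepsilon,\theta}$ is a well-defined closed operator with compact resolvent and that $\sigma_d(P_\varepsilon) = \sigma_d(P_{\varepsilon,\theta})$ in $\Omega$. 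The latter is a standard distortion-analytic argument: since $P_\varepsilon$ has purely discrete spectrum with rapidly decaying eigenfunctions that are analytic in the relevant Fourier strip, eigenvalues persist under deformation and no new ones are created inside $\Omega$.

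The analytic heart of the proof is the uniform norm-resolvent convergence
\[
\bigl\|(P_{\varepsilon,\theta}-z)^{-1}-(P_\theta-z)^{-1}\bigr\|_{L^2\to L^2}\longrightarrow 0\qquad(\varepsilon\to 0),
\]
uniform for $z$ on a compact set $K\subset \Omega\setminus\mathrm{Res}(P)$. With the Fourier-side distortion written as $\xi\mapsto \xi+i\tau\varphi(\xi)$ for a smooth periodic $\varphi$, the operator $Q_\theta$ becomes a second-order differential operator in $\xi$ with bounded coefficients depending smoothly on $\theta$. The resolvent identity then reduces the task to establishing $\|\varepsilon Q_\theta(P_\theta-z)^{-1}\|_{L^2\to L^2} = O(\varepsilon)$ uniformly on $K$, which in turn follows from the $H^2$-elliptic estimates for $P_\theta$ already exploited in the proof of Theorem~\ref{thm-1}.

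With this convergence, the eigenvalue count in $B(z,\rho)$ is obtained by a standard Riesz-projection argument. For $z_0\in\Omega$ and $\rho>0$ small enough that $B(z_0,\rho)\cap\mathrm{Res}(P)\subset\{z_0\}$ and $\partial B(z_0,\rho)\subset \Omega\setminus\mathrm{Res}(P)$, set
\[
\Pi_{\varepsilon,\rho} = \frac{1}{2\pi i}\oint_{\partial B(z_0,\rho)}(\zeta-P_{\varepsilon,\theta})^{-1}\,d\zeta,\qquad \Pi_{0,\rho} = \frac{1}{2\pi i}\oint_{\partial B(z_0,\rho)}(\zeta-P_\theta)^{-1}\,d\zeta.
\]
Uniform resolvent convergence on $\partial B(z_0,\rho)$ gives $\|\Pi_{\varepsilon,\rho}-\Pi_{0,\rho}\|\to 0$, so by the rigidity of the rank of a finite-rank projection under small norm perturbations, $\mathrm{rank}\,\Pi_{\varepsilon,\rho} = \mathrm{rank}\,\Pi_{0,\rho} = m_{z_0}$ for $\varepsilon$ small. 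This yields $\#\sigma_d(P_{\varepsilon,\theta})\cap B(z_0,\rho) = m_{z_0}$, and combined with $\sigma_d(P_\varepsilon) = \sigma_d(P_{\varepsilon,\theta})$ this is the stated conclusion.

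The principal technical obstacle is the norm resolvent convergence above: one must interpret $P_{\varepsilon,\theta}$ as a sectorial closed operator with a natural domain compatible with the $\xi$-analyticity setup of Section~2, and then exploit the $\varepsilon$-smallness of $Q_\theta$ as a \emph{relative} perturbation of $P_\theta$ despite the second-order differential nature of $Q_\theta$ and the quadratic growth of $-i\varepsilon x^2$ before distortion. Once these analytic preliminaries are in place, the remaining operator-theoretic bookkeeping is routine.
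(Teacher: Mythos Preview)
Your outline correctly identifies the overall architecture (distort, show eigenvalues of $P_{\varepsilon,\theta}$ agree with those of $P_\varepsilon$, then count via contour integrals), but the step you flag as ``routine'' is precisely where the argument breaks down. You reduce the norm-resolvent convergence to the estimate $\|\varepsilon Q_\theta(P_\theta-z)^{-1}\|_{L^2\to L^2}=O(\varepsilon)$, claiming it follows from ``$H^2$-elliptic estimates for $P_\theta$''. No such estimates are available here. In the Fourier picture, $\widetilde{P}_\theta=(\xi+\theta\sin\pi\xi)^2+\widetilde{V}_\theta$ is a \emph{multiplication} operator plus a compact perturbation: its resolvent sends $L^2$ into the weighted space $\langle\xi\rangle^{-2}L^2$, not into $H^2_\xi$. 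Meanwhile $Q_\theta$, being the distortion of $x^2=-\partial_\xi^2$, is a genuine second-order differential operator in $\xi$. Hence $Q_\theta(P_\theta-z)^{-1}$ is unbounded on $L^2$ (think of $D_\xi^2(\xi^2+1)^{-1}$), and the resolvent identity you write down does not yield norm convergence. Nothing in the proof of Theorem~\ref{thm-1} supplies $\xi$-derivative regularity; the $\xi^2$-compactness proved there is about growth in $\xi$, not smoothness.

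The paper circumvents this obstruction by \emph{not} treating $-i\varepsilon Q_\theta$ as a perturbation of $P_\theta$. Instead it keeps the viscosity term inside the ``free'' operator $\widetilde{Q}_{\varepsilon,\theta}=(\xi+\theta\sin\pi\xi)^2-i\varepsilon D_\xi(1+\pi\theta\cos\pi\xi)^{-2}D_\xi-i\varepsilon r_\theta$, proves a resolvent bound for $\widetilde{Q}_{\varepsilon,\theta}$ that is \emph{uniform in $\varepsilon$} by viewing it as an $h$-pseudodifferential operator with $h=\sqrt{\varepsilon}$ and checking that its principal symbol is nonvanishing on the relevant region, and then writes $\widetilde{P}_{\varepsilon,\theta}-w=(1+\widetilde{V}_\theta(\widetilde{Q}_{\varepsilon,\theta}-w)^{-1})(\widetilde{Q}_{\varepsilon,\theta}-w)$. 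The eigenvalue count is then handled by the operator-valued Rouch\'e theorem, which only requires $\widetilde{V}_\theta(\widetilde{Q}_{\varepsilon,\theta}-w)^{-1}\to\widetilde{V}_\theta(\widetilde{Q}_{0,\theta}-w)^{-1}$ in norm. This weaker convergence \emph{does} hold because the troublesome $D_\xi^2$ can be absorbed by the smoothing part of $\widetilde{V}_\theta$ after splitting $\widetilde{V}_\theta=\widetilde{V}_{\theta,1}+\widetilde{V}_{\theta,2}$ into a smoothing piece and a small-norm remainder. In short, the compactness of the potential, not ellipticity of $P_\theta$, is what provides the missing $\xi$-regularity, and your proposal does not invoke it.
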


The main idea of Theorem~\ref{thm-1} is as follows: 
We note the standard dilation analytic method for the complex resonances does not apply to our 
potentials. On the other hand, it is known that if we set 
\[
A'=\frac{1}{2}(x\cdot D'+D'\cdot x), \hspace{0.3cm} D'u(x)=\frac{1}{2\pi}(u(x+\pi)-u(x-\pi)),
\]
then we can construct a Mourre theory with this conjugate operator (see \cite{N2}). 
We may use this operator as the generator of complex distortion to define the resonances 
for our model. 
Actually, in the Fourier space, $A'$ is a differential operator 
\[
\widetilde{A'}=(i\partial_{\xi})\cdot \sin(\pi \xi)+\sin(\pi \xi)\cdot (i \partial_{\xi}), 
\]
and this generates a periodic complex distortion in the Fourier space 
(see \cite{N} for Hunziker-type local distortion in the Fourier space). 

As in \cite{Z}, the essential ingredient of the proof of Theorem~\ref{thm-2} is the resolvent estimate of 
the distorted operator which is uniform with respect to $\varepsilon$ in the case of $V=0$. 
We prove this by employing the semiclassical analysis in the Fourier space with 
the semiclassical parameter $h=\sqrt{\varepsilon}$. 
Since we work in the Fourier space, the term $-i\varepsilon x^2=i\varepsilon \partial_{\xi}^2$ 
is the usual viscosity term (multiplied by $i$) and the viscosity limit corresponds to 
the semiclassical limit.  

Wigner-von Neumann type Hamiltonians have been investigated by many authors. 
See for instance \cite{B1}, \cite{B2}, \cite{CHM}, \cite{DMR}, \cite{FH}, \cite{HKS}, \cite{K}, \cite{L}, 
\cite{R}, \cite{RUU} and references therein. 
To our knowledge, the definition of the complex resonances based on the complex distortion 
for Schr\"{o}dinger operators with oscillatory and slowly decaying potentials is new. 
The complex distortion in the momentum variables is studied by Cycon~\cite{C} and Sigal~\cite{S} for 
radially symmetric dilation analytic, or sufficiently smooth exponentially decaying potentials. 
In \cite{N}, this method is extended to not necessarily radially symmetric case. 
See the references in \cite{N} for related earlier works on the complex distortion.

Stefanov~\cite{St} studied the approximation of resonances by the fixed complex absorbing potential method 
in the semclassical limit. 
Similar methods are used in generalized geometric settings in Nonnenmacher-Zworski~\cite{NZ1}, \cite{NZ2} 
and Vasy~\cite{V}. 
As mentioned above, Theorem~\ref{thm-2} was proved by Zworski~\cite{Z} in the case of the 
dilation analytic case. 
Analogous results were proved for Pollicott-Ruelle resonances by Dyatlov-Zworski~\cite{DZ} 
(see also \cite{DR}, \cite{Dr}), 
and for 0th order pseudodifferential operators by Galkowski-Zworski~\cite{GZ}.  
For the numerical results and original approach in physical chemistry, 
see the references in \cite{St}, \cite{Z}.

This paper is organized as follows. 
In Section 2, we introduce the periodic complex distortion in the Fourier space and prove Theorem~\ref{thm-1}.
In Section 3, we prove Theorem~\ref{thm-2} employing the complex distortion introduced in Section 2.

\section{Periodic distortion in the Fourier space}
We introduce the following periodic distortion in the Fourier space
\[\Phi_{\theta}(\xi)=\xi+\theta \sin(\pi \xi), \hspace{0.3cm} 
U_{\theta}f(\xi)=\Phi_{\theta}'(\xi)^{\frac{1}{2}}f(\Phi_{\theta}(\xi)),\]
where $\theta \in (-\pi^{-1}, \pi^{-1})$.
In the Fourier space, $P$ has the form $\widetilde{P}=\xi^2+\widetilde{V}$, 
where $\widetilde{V}=(2\pi)^{-1/2}\hat{V} *$ is a convolution operator and 
$\hat{V}$ is the Fourier transform $\hat{V}(\xi)=(2\pi)^{-1/2}\int V(x)e^{-ix\xi}dx$.  
Hence we have 
\[
\widetilde{P}_{\theta}:= U_{\theta}\widetilde{P} U_{\theta}^{-1}
=(\xi+\theta\sin(\pi \xi))^2+\widetilde{V}_{\theta}, 
\quad \widetilde{V}_{\theta}=U_{\theta}\widetilde{V}U_{\theta}^{-1}. 
\] 

\subsection{Analyticity of $\widetilde{V}_{\theta}$}
\begin{lemma}\label{lem-1}
Under Assumption~\ref{ass-1}, $\widetilde{V}_{\theta}$ is analytic with respect to $\theta$ and 
$\xi^2$-compact for $\theta$ in some complex neighborhood of 
$\{i\delta \mid -K<\delta<K\}$, where $K$ is the constant in Assumption~\ref{ass-1}.
\end{lemma}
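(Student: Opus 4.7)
The strategy is to analyze the Schwartz kernel of $\widetilde V_\theta$ directly. Using $\widetilde V=(2\pi)^{-1/2}\hat V\ast$ and a change of variables from the definition of $U_\theta$, one obtains
\[
K_\theta(\xi,\eta)=(2\pi)^{-1/2}\Phi_\theta'(\xi)^{1/2}\Phi_\theta'(\eta)^{1/2}\hat V\bigl(\Phi_\theta(\xi)-\Phi_\theta(\eta)\bigr).
\]
Expanding each $s_j$ as the absolutely convergent Fourier series $s_j(x)=\sum_k c_{j,k}e^{2ikx}$ gives $\hat V(\zeta)=\sum_{j,k}c_{j,k}\hat W_j(\zeta-2k)$, and combined with the quasi-periodicity $\Phi_\theta(\xi+2)=\Phi_\theta(\xi)+2$ (which holds because $\sin(\pi\,\cdot\,)$ has period $2$) this reduces the problem to analytic control of the individual Fourier transforms $\hat W_j$.

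The main analytic step is to establish that each $\hat W_j$ extends holomorphically to a conic complex neighborhood of $\mathbb R\setminus\{0\}$. I would prove this by deforming the contour in
\[
\hat W_j(\zeta)=(2\pi)^{-1/2}\int_{\mathbb R}W_j(x)e^{-ix\zeta}dx
\]
along a curve asymptotic at $\pm\infty$ to lines of slope in $(-K,K)$, using Cauchy's theorem inside the conic analyticity region $\{|y|<K|x|,\,|x|>R_0\}$ and the polynomial decay $|W_j(z)|\le C|z|^{-\mu}$ to kill the arcs at infinity. I would then verify that, for $\theta$ in a sufficiently small complex neighborhood of $\{i\delta:|\delta|<K\}$, the arguments $\Phi_\theta(\xi)-\Phi_\theta(\eta)-2k$ stay inside this conic domain of analyticity uniformly in $k\in\mathbb Z$ and $\xi,\eta\in\mathbb R$. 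The critical input is the sum-to-product identity
\[
\sin(\pi\xi)-\sin(\pi\eta)=2\cos\bigl(\tfrac{\pi(\xi+\eta)}{2}\bigr)\sin\bigl(\tfrac{\pi(\xi-\eta)}{2}\bigr),
\]
which forces both the real and the imaginary parts of $\Phi_\theta(\xi)-\Phi_\theta(\eta)-2k$ to vanish simultaneously at $\xi-\eta=2k$, yielding a uniform bound of the form $|\mathrm{Im}|\le C|\theta|\,|\mathrm{Re}|$. The analyticity of $\theta\mapsto\widetilde V_\theta$ as a family of $L^2$-bounded operators then follows by summing absolutely in $k$ using $\sum_k|c_{j,k}|<\infty$ and applying Schur's test to the kernel.

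For $\xi^2$-compactness I would proceed by approximation. The undistorted operator $\widetilde V(\xi^2+1)^{-1}$ is compact because it is unitarily equivalent to $V(-\partial_x^2+1)^{-1}$, and $V$ is a bounded continuous function vanishing at infinity (since each $s_j$ is bounded and $W_j\to 0$). Truncating $V_R=\chi(x/R)V$ with a compactly supported cutoff, one checks via a Plancherel-type computation on the explicit kernel formula that $U_\theta\widetilde{V_R}U_\theta^{-1}(\xi^2+1)^{-1}$ is Hilbert--Schmidt, while $U_\theta\widetilde{V-V_R}U_\theta^{-1}(\xi^2+1)^{-1}$ has operator norm controlled via Schur's test by $\|V-V_R\|_\infty\to 0$ as $R\to\infty$, uniformly in $\theta$ on the relevant neighborhood. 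The required compactness is then obtained as a norm limit of compact operators.

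The main technical obstacle is the uniform analytic continuation of the first part: one must keep $\Phi_\theta(\xi)-\Phi_\theta(\eta)-2k$ inside the analyticity cone of $\hat W_j$ simultaneously for every $k\in\mathbb Z$ and for every $\xi,\eta\in\mathbb R$, with bounds strong enough to sum the Fourier series absolutely. The delicate regime is the diagonal $\xi-\eta\approx 2k$, where both real and imaginary parts of the argument of $\hat W_j$ tend to zero together and one must carefully verify that their ratio stays below the conic slope so that $\hat W_j$ is defined there.
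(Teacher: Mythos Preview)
Your analyticity argument follows the paper's closely: the same kernel formula, the contour deformation for $\hat W_j$ into the conic region, the Fourier expansion of $s_j$, and the key geometric fact that the imaginary and real parts of $\Phi_\theta(\xi)-\Phi_\theta(\eta)-2k$ vanish together on the diagonal (the paper records this as the inequality $|\mathrm{Im}(\Phi_\theta(\xi)-\Phi_\theta(\eta))|\le|\theta|\,|\mathrm{Re}(\Phi_\theta(\xi)-\Phi_\theta(\eta))-2k|$; your sum-to-product identity is exactly what proves it). One ingredient you should make explicit is the quantitative bound $|\hat W_j(\zeta)|\le C|\zeta|^{-1/(1+\mu)}$ near the tip of the cone, which the paper extracts by a scaling argument in the deformed integral; without it the kernel need not be integrable across the diagonals $\xi-\eta\in 2\mathbb Z$, and Schur's (or Young's) test would not close.

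The $\xi^2$-compactness step, however, has a real gap. You truncate $V_R=\chi(\cdot/R)V$ and assert that the distorted remainder is controlled via Schur by $\|V-V_R\|_\infty$. For complex $\theta$ the map $U_\theta$ is not unitary, so the operator norm of $U_\theta\widetilde{(V-V_R)}U_\theta^{-1}$ is \emph{not} bounded by $\|V-V_R\|_\infty$: Schur's test requires integrability of $\widehat{(V-V_R)}$ along the complex curve $\zeta=\Phi_\theta(\xi)-\Phi_\theta(\eta)$, and that needs the full analytic-continuation machinery, with constants that have no reason to tend to zero merely because the sup-norm of $V-V_R$ does. Moreover your $V_R$ is only continuous (the $s_j$ are only assumed continuous), so it is not covered by a $C_c^\infty$ argument either. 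The paper's fix is a sharper decomposition: it sets $V_{1,R}=\chi(x/R)W(x)\sum_{|k|\le R}a_ke^{2ikx}$, which is genuinely $C_c^\infty$ and hence yields a pseudodifferential operator with rapidly decaying symbol (so $\xi^2$-compact directly), while the remainder $V_{2,R}$ retains the $s\cdot W$ structure of Assumption~A, so the analytic kernel estimates from the first part can be reapplied to it and shown to tend to zero (the Fourier tail $\sum_{|k|>R}|a_k|\to 0$ handles one piece, and $(1-\chi(\cdot/R))W$ still satisfies the hypotheses on $W$ with a larger $R_0$).
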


\begin{remark}\label{rem-2-1}
In the case of $V(x)=a\frac{\sin 2x}{x}$, we have $\widetilde{V}=\frac{a}{2}\chi_{[-2, 2]}*$.  
Then by  a simple computation we learn 
$ \widetilde{V}_{\theta}=(\Phi_{\theta}')^{\frac{1}{2}}\widetilde{V}(\Phi_{\theta}')^{\frac{1}{2}}$, 
where $(\Phi_{\theta}')^{\frac{1}{2}}$ is a multiplication operator by $\Phi_{\theta}'(\xi)^{\frac{1}{2}}$.
Thus the Lemma 1 is simpler to show in this case. 
\end{remark}
\begin{proof}[Proof of Lemma~\ref{lem-1}]
For real $\theta$, the integral kernel $\widetilde{V}_{\theta}(\xi, \eta)$ of $\widetilde{V}_{\theta}$ is given by
\[
\widetilde{V}_{\theta}(\xi, \eta)=
\Phi_{\theta}'(\xi)^{\frac{1}{2}}\hat{V}(\Phi_{\theta}(\xi)-\Phi_{\theta}(\eta))\Phi_{\theta}'(\eta)^{\frac{1}{2}}, 
\quad \xi,\eta\in\mathbb{R}. 
\]
We first consider the case of $V \in C^{\infty}_c(\mathbb{R}; \mathbb{R})$. 
Then the Paley-Wiener estimate implies that $\widetilde{V}_{\theta}(\xi, \eta)$ is analytic with respect to $\theta \in \mathbb{C}$ 
 and has the off-diagonal decay bounds 
\[ 
|\partial_{\xi}^{\alpha}\partial_{\eta}^{\beta} \widetilde{V}_{\theta}(\xi, \eta)|\le C_{\alpha, \beta, N}\langle \xi-\eta \rangle^{-N}, \quad \xi,\eta\in\mathbb{R}
\]
for any $\alpha, \beta$ and $N$, 
where $C_{\alpha, \beta, N}$ is independent of $\theta$ when $\theta\in \mathbb{C}$ ranges over a bounded set.
This and the formula 
\[\widetilde{V}_{\theta}=a^{\mathrm{w}}(\xi, D_{\xi}; \theta), \hspace{0.3cm} 
a(\xi, x; \theta)=\int \widetilde{V}_{\theta}\Bigl(\xi+\frac{\eta}{2}, \xi-\frac{\eta}{2}\Bigr)e^{-i\langle \eta, x\rangle}d\eta,\]
where $a^{\mathrm{w}}$ denotes the Weyl quantization (see, e.g., \cite[Section~8.1]{Z2}), 
imply that $\widetilde{V}_{\theta}$ is a pseudodifferential operator in the Fourier space with a symbol rapidly
decaying with respect to $x$ and analytic with respect to $\theta$. 
Thus the Lemma~\ref{lem-1} is proved in this case.

We next consider the case of $V(x)=s(x)W(x)$, where $s(x)$ and $W(x)$ satisfy the condition in Assumption~\ref{ass-1}. 
We first estimate the Fourier transform of $W(x)$. 
By the deformation of the integral, we have 
\[
\hat{W}(\xi)=(2\pi)^{-1/2}\int_{C_\pm, \tau}  W(z)e^{-iz\xi}dz,\quad \pm \xi>0,
\] 
where 
\[
C_{\pm, \tau}=(e^{\pm i\tau}(-\infty, 0]-2R_0)\cup [-2R_0, 2R_0]\cup (2R_0+e^{\mp i\tau}[0, \infty)),
\] 
$0<\tau<\arctan K$, and $R_0$ is that in Assumption~\ref{ass-1}.
This expression shows that $\hat{W}(\xi)$ has an analytic continuation to 
\[
S_{\tau}=\{z\in \mathbb{C}^*|
-\tau<\arg z<\tau\}\cup \{z\in \mathbb{C}^*|-\tau<\arg z-\pi<\tau\}.
\] 
$\hat{W}(\xi)$ decays rapidly in $S_{\tau}$ when $|\xi|\to \infty$ thanks to the smoothness of $W$. 
For small $\xi\in S_{\tau}$, we have $|\hat{W}(\xi)|\le C|\xi|^{-\frac{1}{1+\mu}}$, 
where $\mu>0$ is that in Assumption~\ref{ass-1}. 
To see this, we take $C_{\pm, \tau'}$ for $0<\tau<\tau'<\arctan K$ and estimate 
\[|\hat{W}(\xi)|\le C\int_0^{\infty}e^{-cx|\xi|}\langle x \rangle^{-\mu}dx 
= C|\xi|^{-1}\int_0^{\infty}e^{-c|x|}\langle x/|\xi| \rangle^{-\mu} dx.\]
We divide the integral into $\int_0^{\varepsilon}+\int_{\varepsilon}^{\infty}$ and we obtain the bound 
$ \frac{\varepsilon}{|\xi|}+\frac{1}{|\xi|}\langle \varepsilon /|\xi| \rangle^{-\mu}$. 
Taking $\varepsilon=|\xi|^{\frac{\mu}{1+\mu}}$, we have $|\hat{W}(\xi)|\le C|\xi|^{-\frac{1}{1+\mu}}$. 
We denote the Fourier transform of $s$ by $\hat{s}(\xi)=\sqrt{2\pi}\sum_{k\in \mathbb{Z}}a_k \delta (\xi-2k)$.
By Assumption~\ref{ass-1}, we have 
$\sum_{k \in \mathbb{Z}}|a_k|<\infty$. 
This and the estimates on $\hat{W}(\xi)$ above show that the Fourier transform 
\[
\hat{V}(\xi)=\sum_{k\in \mathbb{Z}} a_k \hat{W}(\xi-2k)
\]
has an analytic continuation to the region 
$T_{\tau}=\bigcup_{k\in \mathbb{Z}}T_{\tau, k}$, where 
\[
T_{\tau, k}=\{z\in \mathbb{C}\setminus\{0, 2\}|-\tau<\arg z<\tau, 
-\tau<\arg (2-z)<\tau\}+2k.
\]
Moreover, we have 
\begin{equation}\label{eq-2-1}
\sum_{k \in \mathbb{Z}}\sup_{T_{\tau, k}}|\xi-2k|^{\frac{1}{1+\mu}}|\xi-2k-2|^{\frac{1}{1+\mu}}|\hat{V}(\xi)|<\infty
\end{equation}
By \eqref{eq-2-1}, we have $|\widetilde{V}_{\theta}(\xi, \eta)|\le g(\xi-\eta)$ 
for some integrable function $g$. 
This is also true for $\frac{d}{d\theta} \widetilde{V}_{\theta}(\xi, \eta)$ by Cauchy's formula with respect to $\theta$. 
Thus Young's inequality implies that the operator $\widetilde{V}_{\theta}$ with integral kernel 
$\widetilde{V}_{\theta}(\xi, \eta)$ is $L^2$-bounded and analytic with respect to $\theta$.  
We note that if $\theta$ is purely imaginary, we have
\[
|\mathrm{Im}(\Phi_{\theta}(\xi)-\Phi_{\theta}(\eta))|
\le |\theta||\mathrm{Re}(\Phi_{\theta}(\xi)-\Phi_{\theta}(\eta))-2k|.
\]
Thus $\theta$ may be taken from a complex neighborhood of $\{i\delta | -\tan \tau <\delta<\tan \tau\}$. 
Since $0<\tau<\arctan K$ is arbitrary, $\widetilde V_\theta$ is analytic for $\theta$  as claimed in Lemma~\ref{lem-1}. 

To see $\xi^2$-compactness, we approximate $V$ by $C_c^{\infty}$ functions. 
Take $\chi \in C_c^{\infty}(\mathbb{R})$ such that $\chi=1$ near $x=0$. 
We decompose $V(x)=V_{1, R}+V_{2, R}$, where $R>0$,  
\[
V_{1, R}=\chi(x/R)W(x)\sum_{|k|\le R}a_k e^{2ikx}
\]
and 
\[
V_{2, R}=W(x)\sum_{|k|> R}a_k e^{2ikx}+(1-\chi(x/R))W(x)\sum_{|k|\le R}a_k e^{2ikx}. 
\]
We also denote the corresponding distorted operator on the Fourier space by
$\widetilde{V}_{\theta, 1, R}$ and $\widetilde{V}_{\theta, 2, R}$.  
Since $V_{1, R}\in C_c^{\infty}$, $\widetilde{V}_{\theta, 1, R}$ is $\xi^2$-compact. 
We also see that $\lim_{R\to \infty}\|\widetilde{V}_{\theta, 2, R}\|_{L^2 \to L^2}=0$ 
by the estimate for $V=s(x)W(x)$ as above. 
These completes the proof of Lemma~\ref{lem-1}
\end{proof}

\subsection{Definition of resonances}
By Lemma~\ref{lem-1} we learn that $\widetilde{P}_{\theta}$ is analytic with respect to $\theta$ 
in the sense of Kato, 
and the essential spectrum of $\widetilde{P}_\theta$ is given by 
\[
\sigma_{\mathrm{ess}}(\widetilde{P}_{\theta})
=\bigl\{(\xi+\theta\sin(\pi \xi))^2\bigm| \hspace{0.1cm}\xi \in \mathbb{R}\bigr\}.
\]
We note the critical values of $(\xi+\theta\sin(\pi \xi))^2$ is $\mathcal{T}=\{n^2\mid n\in \mathbb{N}\cup\{0\}\}$. 
We fix $n\in\mathbb{N}$, and for the energy interval $((n-1)^2, n^2)$, we take $\theta=(-1)^n i \delta=\pm i \delta$. 
We easily see that for $0<\delta<\pi^{-1}$ 
the essential spectrum of $\widetilde{P}_{\pm i \delta}$ is the graph of a function 
and we define $\kappa_{\pm \delta}(x)$, $x\ge 0$, by the relation 
\[
\sigma_{\mathrm{ess}}(\widetilde{P}_{\pm i \delta})=\bigl\{z=x+iy  \bigm|  y=\kappa_{\pm \delta}(x), x\ge 0\bigr\}.
\]
We choose $0<\delta<\delta_0=\min \{\pi^{-1}, K\}$, and we 
set 
\[
\Omega_{n, \delta}=\bigl\{z=x+iy \bigm| (n-1)^2<x<n^2, y>\kappa_{(-1)^n \delta}(x)\bigr\}.
\]

\begin{proof}[Proof of Theorem~\ref{thm-1}]
We fix $n\in\mathbb{N}$ and $\delta>0$ as above, and we denote 
$\mathcal{A}=L_{\mathrm{comp}}^{2}(\mathbb{R})$. 
We first note that $U_{\theta}\hat{f}$ ($f\in \mathcal{A}$) has an analytic 
continuation for complex $\theta$. 
We denote the resolvent $R_{+}(z)$ on the Fourier space by $\widetilde{R_{+}}(z)$.
For  $f, g \in \mathcal{A}$, we have 
\[(\hat{f}, \widetilde{R_{+}}(z)\hat{g})
= (U_{\theta}\hat{f}, U_{\theta}\widetilde{R_{+}}(z)U_{\theta}^{-1}U_{\theta}\hat{g})
=(U_{\overline{\theta}}\hat{f}, (z-\widetilde{P_{\theta}})^{-1}U_{\theta}\hat{g})
\]
where $\theta\in\mathbb{R}$ and $\mathrm{Im}z>0$. 
The right hand side is analytic with respect to $\theta$ by Lemma~\ref{lem-1}, 
where $\theta$ ranges over a complex neighborhood of $\{(-1)^n i \delta \mid 0\le \delta <\delta_0\}$. 
This in turn implies that the left hand side has a meromorphic continuation to $\Omega_{n, \delta_0}$ 
with respect to $z$. 
Thus Theorem~\ref{thm-1} is proved for $\Omega=
\bigcup_{n\in \mathbb{N}}\Omega_{n, \delta_0}$.
\end{proof}

\begin{remark}\label{rem-2-2}
We set $\Pi_z^{\theta}=\frac{1}{2\pi i}\oint_{C(z)}(\zeta-\widetilde{P_{\theta}})^{-1}d\zeta$ 
be the spectral projection for $\widetilde{P_{\theta}}$. 
Then we have 
\[
\frac{1}{2\pi i}\oint_{C(z)}(f, R_{+}(\zeta)g)d\zeta=\frac{1}{2\pi i}\oint_{C(z)} 
(U_{\overline{\theta}}\hat{f}, (\zeta-\widetilde{P_{\theta}})^{-1}U_{\theta}\hat{g})d\zeta
=(U_{\overline{\theta}}\hat{f}, \Pi_z^{\theta}U_{\theta}\hat{g}).
\]
We note that $\{U_{\theta}\hat{f}\mid f\in \mathcal{A}\}$ is dense in $L^2$, 
which is proved by an argument similar to \cite[Theorem 3]{H}.
This implies that $m_z=\mathrm{rank}[\Pi_z^{\theta}]$. 
Namely, the resonances coincide with the discrete eigenvalues of $\widetilde{P_{\theta}}$ 
including multiplicities.  
In particular, $\mathrm{Res}(P)$ is discrete and $m_z<\infty$ for any $z\in \Omega$. 
\end{remark}

\begin{remark}
In the case of $V=a\frac{\sin 2x}{x}+V_0$, $V_0 \in C^{\infty}_c(\mathbb{R}; \mathbb{R})$, 
Remark~\ref{rem-2-1} and the proof of Lemma~\ref{lem-1} show that Lemma~\ref{lem-1} holds for 
$\theta \in \mathbb{C}\setminus (-\pi^{-1}, \pi^{-1})$. 
Thus the set of resonances $\mathrm{Res}_n(P)$ is defined in $\mathbb{C}\setminus (0, \infty)$ 
for any $n\in \mathbb{N}$ including multiplicities 
by the meromorphic continuation of $(f, R_+(z)g)$ from $\{z\mid  0<\arg z<\pi\}$ to 
\[
\{z\mid 0<\arg z<\pi\}\cup \{z\mid\arg z=0, (n-1)^2<|z|<n^2\}\cup \{z\mid-2\pi<\arg z<0\}.
\]
This poses the problem of whether $\mathrm{Res}_n(P)\not=\mathrm{Res}_{n'}(P)$ when $n\not=n'$.
\end{remark}

\section{Viscosity limit}
For the notational simplicity, we set $P_0=P$, $\widetilde{P}_0=\widetilde{P}$ 
and $\widetilde{P}_{0,\theta}=\widetilde{P}_{\theta}$.
In the Fourier space, $P_{\varepsilon}$, $\varepsilon \ge 0$, has the following form: 
\[
\widetilde{P}_{\varepsilon}=\xi^2+\widetilde{V}+i\varepsilon \partial_{\xi}^2.
\]
Hence the distorted operator $\widetilde{P}_{\varepsilon, \theta}
=U_{\theta}\widetilde{P}_{\varepsilon} U_{\theta}^{-1}$
is given by 
\[
\widetilde{P}_{\varepsilon, \theta}
=(\xi+\theta\sin(\pi \xi))^2+\widetilde{V}_{\theta}-i\varepsilon D_{\xi}(1+\pi \theta \cos(\pi \xi))^{-2}D_{\xi}
-i\varepsilon r_{\theta}(\xi),
\]
where $r_{\theta}(\xi)=-\Phi_{\theta}'(\xi)^{-\frac{1}{2}}\partial_{\xi}
(\Phi_{\theta}'(\xi)^{-1}\partial_{\xi}(\Phi_{\theta}'(\xi)^{-\frac{1}{2}}))$ 
is a function which is analytic with respect to $\theta$ and bounded with respect to $\xi$. 
Since $\widetilde{P}_{\varepsilon,\theta}$ is elliptic, 
$\widetilde{P}_{\varepsilon, \theta}$, $\varepsilon>0$, has purely discrete spectrum.
Moreover, for fixed $\varepsilon>0$, 
$\widetilde{P}_{\varepsilon,\theta}$ is analytic with respect to $\theta$ in the sense of Kato. 
These imply that the eigenvalues of 
$\widetilde{P}_{\varepsilon, \theta}$ coincide with those of $\widetilde{P}_{\varepsilon}$ 
including multiplicities by the same argument as in Remark~2.2.
Thus it is enough to show that the eigenvalues of $\widetilde{P}_{\varepsilon, \theta}$ 
converge to those of $\widetilde{P}_{\theta}$ as $\varepsilon \to +0$.

\begin{proof}[Proof of Theorem~\ref{thm-2}]
We first prove the resolvent estimate for the following distorted free Hamiltonian
\[
\widetilde{Q}_{\varepsilon, \theta}=(\xi+\theta\sin(\pi \xi))^2-i\varepsilon D_{\xi}
(1+\pi \theta \cos(\pi \xi))^{-2}D_{\xi}-i\varepsilon r_{\theta}(\xi), 
\quad \varepsilon \ge 0. 
\]
In the following, we fix $n\in \mathbb{N}$, 
set $\theta=(-1)^n i \delta=\pm i \delta$, $0<\delta<\delta_0$  as in Section 2.

We set $h=\sqrt{\varepsilon}$ and view $\widetilde{Q}_{\varepsilon, \theta}$ 
as an $h$-pseudodifferential operator in the Fourier space. 
We easily see that the numerical range of the $h$-principal symbol of $\widetilde{Q}_{\varepsilon, \theta}$, i.e., 
\[
\bigl\{ (\xi+\theta\sin(\pi \xi))^2-i(1+\pi \theta \cos(\pi \xi))^{-2}x^2\bigm| x,\xi\in\mathbb{R}\bigr\}
\] 
is disjoint form the set 
\[
\Omega'_{n, \delta}=\Omega_{n, \delta}\cap\biggl\{z=x+iy\biggm| 
y>\frac{1}{2}\Bigl(\frac{1}{\pi \delta}-\pi \delta\Bigr)\bigl((n-1)^2-x\bigr), 
y>\frac{1}{2}\Bigl(\frac{1}{\pi \delta}-\pi \delta\Bigr)\bigl(x-n^2\bigr) \biggr\}.
\]
We note that a simple computation shows that $\Omega'_{n, \delta}=\Omega_{n, \delta}$ for $0<\delta<\frac{1}{\sqrt{3}\pi}$. 
Now we fix $z \in \Omega'_{n, \delta}$. 
Then there exists $\rho_0>0$ such that there is no resonance in 
$B(z, \rho_0)\Subset \Omega'_{n, \delta}$ possibly expect for $z$, 
where $B(z,\rho)$ denotes the ball of radius $\rho$ with the center at $z$. 
In the following, we fix $0<\rho<\rho_0$, and let $w\in B_z=B(z, \rho)$.
By the standard semiclassical calculus we learn $(\widetilde{Q}_{\varepsilon, \theta}-w)^{-1}$ exists 
and  $\|(\widetilde{Q}_{\varepsilon, \theta}-w)^{-1}\|_{L^2 \to L^2}\le C$ for 
$w \in B_z$ and for sufficiently  small $\varepsilon>0$. 
We note that it also holds for $\varepsilon=0$.

We next employ the perturbation argument. 
Since $(\widetilde{Q}_{\varepsilon, \theta}-w)^{-1}$ exists, we have 
\[ 
\widetilde{P}_{\varepsilon, \theta}-w
=(1+\widetilde{V}_{\theta}(\widetilde{Q}_{\varepsilon, \theta}-w)^{-1})(\widetilde{Q}_{\varepsilon, \theta}-w).
\]
By Lemma~\ref{lem-1} and the boundedness of $(\xi^2+i)(\widetilde{Q}_{\varepsilon, \theta}-w)^{-1}$,
we learn $\widetilde{V}_{\theta}(\widetilde{Q}_{\varepsilon, \theta}-w)^{-1}$ is compact for $\varepsilon \ge 0$. 
Thus the analytic Fredholm theory can be applied, and 
the number of the eigenvalues of $P_{\varepsilon, \theta}$, $\varepsilon \ge 0$, in $B_z$ is given by  
\[  
\mathrm{tr}\oint_{\partial B_z} (w-\widetilde{P}_{\varepsilon, \theta})^{-1}dw
=\mathrm{tr}\oint_{\partial B_z} (\partial_w\widetilde{V}_{\theta}(\widetilde{Q}_{\varepsilon, \theta}-w)^{-1})
(1+\widetilde{V}_{\theta}(\widetilde{Q}_{\varepsilon, \theta}-w)^{-1})^{-1}dw.
\]
Thus operator-valued Rouch\'{e}'s theorem (\cite[Theorem~2.2]{GS}) implies that in order to prove Theorem~\ref{thm-2}, 
it suffices to show 
\[
\bigl\| ((1+\widetilde{V}_{\theta}(\widetilde{Q}_{0, \theta}-w)^{-1})
-(1+\widetilde{V}_{\theta}(\widetilde{Q}_{\varepsilon, \theta}-w)^{-1}))
(1+\widetilde{V}_{\theta}(\widetilde{Q}_{0, \theta}-w)^{-1})^{-1}\bigr\|_{L^2 \to L^2}<1
\] 
for $w \in \partial B_z$ and small $\varepsilon>0$. 
Since $(1+\widetilde{V}_{\theta}(\widetilde{Q}_{0, \theta}-w)^{-1})^{-1}$ exists and 
independent of $\varepsilon>0$ for $w \in \partial B_z$, the above estimate hold if we show 
\begin{equation}\label{eq-3-1}
\lim_{\varepsilon \to 0}\| \widetilde{V}_{\theta}(\widetilde{Q}_{0, \theta}-w)^{-1}
-\widetilde{V}_{\theta}(\widetilde{Q}_{\varepsilon, \theta}-w)^{-1}\|_{L^2 \to L^2}=0
\end{equation}
uniformly for $w\in \partial B_z$. 

Let $\gamma>0$. 
By the arguments in the proof of Lemma~\ref{lem-1}, 
we can decompose $\widetilde{V}_{\theta}=\widetilde{V}_{\theta, 1}+\widetilde{V}_{\theta, 2}$, 
where $\widetilde{V}_{\theta, 1}$ is a smoothing pseudodifferential 
operator in the Fourier space and $\|\widetilde{V}_{\theta, 2}\|_{L^2 \to L^2}<\gamma$.
Since $\|(\widetilde{Q}_{\varepsilon, \theta}-w)^{-1}\|_{L^2 \to L^2}\le C$ 
for small $\varepsilon\ge 0$ and $w\in B_z$, we have 
\[
\|\widetilde{V}_{\theta, 2}(\widetilde{Q}_{0, \theta}-w)^{-1}
-\widetilde{V}_{\theta, 2}(\widetilde{Q}_{\varepsilon, \theta}-w)^{-1}\|_{L^2 \to L^2}\le 2C\gamma,
\] 
where $C$ is independent of $\gamma$. 
By the resolvent equation, we also learn 
\begin{align*}
&\widetilde{V}_{\theta, 1}(\widetilde{Q}_{0, \theta}-w)^{-1}
-\widetilde{V}_{\theta, 1}(\widetilde{Q}_{\varepsilon, \theta}-w)^{-1}\\
&=-i \varepsilon \widetilde{V}_{\theta, 1}(\widetilde{Q}_{0, \theta}-w)^{-1}
(D_{\xi}(1+\pi \theta \cos(\pi \xi))^{-2}D_{\xi}+ r_{\theta}(\xi))(\widetilde{Q}_{\varepsilon, \theta}-w)^{-1}.
\end{align*}
Since $\widetilde{V}_{\theta, 1}$ is a smoothing pseudodifferential 
operator and 
$(\widetilde{Q}_{0, \theta}-w)^{-1}$ is also a pseudodifferential operator with a bounded symbol, 
$\widetilde{V}_{\theta, 1}(\widetilde{Q}_{0, \theta}-w)^{-1}
D_{\xi}^2$ is $L^2$-bounded. 
Thus we have 
\[
\|\widetilde{V}_{\theta, 1}(\widetilde{Q}_{0, \theta}-w)^{-1}
-\widetilde{V}_{\theta, 1}(\widetilde{Q}_{\varepsilon, \theta}-w)^{-1}\|_{L^2 \to L^2}
\le C_{\gamma} \varepsilon
\]
with some ($\gamma$-dependent) constant $C_{\gamma}>0$. 
If $\varepsilon$ is so small that $\varepsilon\leq (C/C_{\gamma})\gamma$, we have 
\[
\| \widetilde{V}_{\theta}(\widetilde{Q}_{0, \theta}-w)^{-1}
-\widetilde{V}_{\theta}(\widetilde{Q}_{\varepsilon, \theta}-w)^{-1}\|_{L^2 \to L^2}
\le 2C\gamma+C_{\gamma} \varepsilon\le 3C\gamma
\]
and thus \eqref{eq-3-1} is proved since $\gamma>0$ may be arbitrary small. 
Since $0<\delta<\delta_0$ is arbitrary, Theorem~\ref{thm-2} is proved for 
$\Omega=\bigcup_{n\in \mathbb{N}}\Omega'_{n, \delta_0}$.
\end{proof}

\begin{remark}
In the case of $V(x)=a\frac{\sin 2x}{x}$, the above proof is simpler. Namely, 
it suffices to use the decomposition: 
\begin{align*}
\widetilde{V}_{\theta}&=(\Phi_{\theta}')^{\frac{1}{2}}\widetilde{V}(\Phi_{\theta}')^{\frac{1}{2}}
=(\Phi_{\theta}')^{\frac{1}{2}} \widetilde{V}_{1, R} (\Phi_{\theta}')^{\frac{1}{2}}
+(\Phi_{\theta}')^{\frac{1}{2}}\widetilde{V}_{2, R}(\Phi_{\theta}')^{\frac{1}{2}}
=\widetilde{V}_{\theta, 1}+\widetilde{V}_{\theta, 2}
\end{align*}
for large $R>0$, where $\widetilde{V}_{j, R}$ is the Fourier multiplier 
on the Fourier space by $V_{j, R}$, $\chi \in C_c^{\infty}(\mathbb{R})$ such that $\chi=1$ near $x=0$, and
$a\frac{\sin 2x}{x}=a\frac{\sin 2x}{x}\chi(x/R)+a\frac{\sin 2x}{x}(1-\chi(x/R))=V_{1, R}+V_{2, R}$.
\end{remark}

\end{document}